\newcommand{\dom}{\mathit{dom}}
\renewcommand{\V}{\ensuremath{\mathbb{V}}}
\renewcommand{\trans}[4][\empty]{\ensuremath{#2 \, \stackrel{#3}{\rightarrow}_{#1} \, #4}}
\title{The categorical limit \\ of a sequence of dynamical systems}
\author{P.J.L. Cuijpers
\institute{Department of Mathematics and Computer Science\\ Technische Universiteit Eindhoven}
\email{p.j.l.cuijpers@tue.nl}
}
\begin{document}
\maketitle

\begin{abstract}
Modeling a sequence of design steps, or a sequence of parameter settings, yields a sequence of dynamical systems. In many cases, such a sequence is intended to approximate a certain limit case. However, formally defining that limit turns out to be subject to ambiguity. Depending on the interpretation of the sequence, i.e.\ depending on how the behaviors of the systems in the sequence are related, it may vary what the limit should be. Topologies, and in particular metrics, define limits uniquely, if they exist. Thus they select one interpretation implicitly and leave no room for other interpretations. In this paper, we define limits using category theory, and use the mentioned relations between system behaviors explicitly. This resolves the problem of ambiguity in a more controlled way. We introduce a category of prefix orders on executions and partial history preserving maps between them to describe both discrete and continuous branching time dynamics. We prove that in this category all projective limits exist, and illustrate how ambiguity in the definition of limits is resolved using an example. Moreover, we show how various problems with known topological approaches (in particular that of \cite{Ying}) are now resolved, and how the construction of projective limits enables us to approximate continuous time dynamics as a sequence of discrete time systems.
\end{abstract}

\section{Introduction}

\begin{figure}[ht]
\centering
\begin{tiny}
\begin{diagram}[height=1em,width=1em]
   &&&&&       &&&&&      &            &        &&&&&      &            &      &            &      &&&&&&      &            &      &            &      && \circ   &  &  \\
   &&&&&       &&&&&      &            &        &&&&&      &            &      &            &      &&&&&&      &            &      &            &      &&    &          & \\
   &&&&&       &&&&&      &            &        &&&&&      &            &     &            &       &&&&&&      &            &      &            &     && \uTo   &  &  \\
   &&&&&       &&&&&      &            &        &&&&&      &           &      &           &        &&&&&&      &            &      &            &   &&    && \\
   &&&&&       &&&&&      &            &        &&&&&      &            &      &            & \circ   &&&&&&      &            &   &            & \circ     && \circ  &  &  \\
   &&&&&       &&&&&      &            &        &&&&&      &            &      &            &      &&&&&&      &            &      &            &      &&   &     & \\
   &&&&&       &&&&&      &            &        &&&&&      &           &       &            & \uTo &&&&&&     &            &       &          & \uTo && \uTo   &  &  \\
   &&&&&       &&&&&      &            &        &&&&&      &            &      &            &      &&&&&&      &            &      &            &      &&    &      & \\
   &&&&&       &&&&&      &            &  \circ &&&&&      &            &  \circ  &         &  \circ   &&&&&&      &            & \circ  &      &   \circ   &&  \circ  &  &  \\
   &&&&&       &&&&&      &            &        &&&&&      &            &      &            &      &&&&&&      &            &      &            &      &&    &       & \\
   &&&&&       &&&&&      &            &  \uTo  &&&&&      &            & \uTo &            & \uTo &&&&&&      &            &  \uTo &            & \uTo  && \uTo   &  &  \\
   &&&&&       &&&&&      &            &        &&&&&      &            &      &            &      &&&&&&      &            &      &            &        &&    &        & \\
   &&&&&   \circ  &&&&&  \circ  &             &  \circ  &&&&&   \circ &            &  \circ   &            & \circ   &&&&&&  \circ  &            &  \circ  &            & \circ     && \circ &  &  \\
   &&&&&  \uTo &&&&&      & \luTo(2,3)  & \uTo &&&&&      & \luTo(2,3) & \uTo & \ruTo(2,3) &      &&&&&&      &  \luTo(2,3) & \uTo &  \ruTo(2,3) &      & \ruTo(4,3) &&& \\ \\
\circ &&&&&    \circ &&& &&      &            &  \circ    &&& &&      &            & \circ   &            &      &&&&&&      &            & \circ   &            &      &&    &  &  \\
 \mathit{day}(0) &&&&&  \mathit{day}(1) &&& &&      &            &   \mathit{day}(2)    &&& &&      &            &  \mathit{day}(3)   &            &      &&&&&&      &            &  \mathit{day}(4)   &            &      &&    & \cdots &  \\
\end{diagram}
\end{tiny}
\begin{small}
\caption{A sequence of dynamical systems for which multiple limits may be derived}%
\label{fig:FanSequence1}%
\end{small}
\vspace{0.2cm}
\begin{minipage}[b]{0.45\linewidth}
\centering
\begin{tiny}
\begin{diagram}[height=1.5em,width=2.1em]
      &       &       &       &       & \circ & \cdots \\
      &       &       &       &       & \uTo  & \\
      &       &       &       & \circ & \circ & \cdots \\
      &       &       &       & \uTo  & \uTo  & \\
      &       &       & \circ & \circ & \circ & \cdots \\
      &       &       & \uTo  & \uTo  & \uTo  & \\
      &       & \circ & \circ & \circ & \circ & \cdots \\
      &       & \uTo  & \uTo  & \uTo  & \uTo  & \\
      & \circ & \circ & \circ & \circ & \circ & \cdots \\
      & \uTo  & \uTo  & \uTo  & \uTo  & \uTo  & \\
\circ & \circ & \circ & \circ & \circ & \circ & \cdots \\
& \luTo(3,2) & \luTo(2,2) & \luTo(1,2) \uTo \ruTo(1,2) & \ruTo(2,2) & \ruTo(3,2) \\
      &       &         &          \circ             &            &            & \\
      &       &         &     \mathit{day}(\infty)   &            &            &
\end{diagram}
\end{tiny}
\begin{small}
\caption{Limit obtained by adding a new strand of execution every day.}%
\label{fig:Limit1}%
\end{small}
\end{minipage}
\hspace{0.5cm}
\begin{minipage}[b]{0.45\linewidth}
\centering
\begin{tiny}
\begin{diagram}[height=1.5em,width=2em]
 & \cdots & \cdots & \cdots & \cdots & \cdots & \cdots & \cdots \\
 & \uTo & \uTo & \uTo & \uTo & \uTo & \uTo & \uTo \\
\cdots & \circ & \circ & \circ & \circ & \circ & \circ & \circ \\
 & \uTo & \uTo & \uTo & \uTo & \uTo & \uTo & \uTo \\
\cdots & \circ & \circ & \circ & \circ & \circ & \circ & \circ \\
 & \uTo & \uTo & \uTo & \uTo & \uTo & \uTo & \uTo \\
\cdots & \circ & \circ & \circ & \circ & \circ & \circ & \circ \\
 & \uTo & \uTo & \uTo & \uTo & \uTo & \uTo & \uTo \\
\cdots & \circ & \circ & \circ & \circ & \circ & \circ & \circ \\
        & \luTo(4,2) & \luTo(3,2) & \luTo(2,2) & \luTo(1,2) \uTo \ruTo(1,2) & \ruTo(2,2) & \ruTo(3,2) & \ruTo(4,2) & \\
        &           &         &         &          \circ          &         &   &    & \\
                                 &&&& \mathit{day}(\infty) &&&&
\end{diagram}
\end{tiny}
\begin{small}
\caption{Limit obtained by increasing each strand by one everyday, and adding a new strand of length one.}%
\label{fig:Limit2}%
\end{small}
\end{minipage}
\vspace{-0.2cm}
\end{figure}

As quantitative properties of software systems become increasingly more important, the concept of `approximate correctness' also gains interest. Timed systems should be robust against small deviations in clock speed, hybrid systems should be robust against minor changes in their physical environment, and schedules should be robust against changes in workload and availability of resources. Mathematically, robustness is achieved when approximately equivalent systems have the same properties as the target system, and approximation is usually defined using a metric, or other topology on system behavior (see e.g. \cite{vanBreugel,vanBreugel05,Henzinger10,Huth05,PerrinPin,Ying}).

With a metric, or other topology comes a notion of limit, in the sense that a sequence of systems is said to approximate a certain limit point if the elements of the sequence get closer to that point as the sequence progresses \cite{Eisenberg}. Reversely, with an appropriately chosen notion of limit (a so-called Moore-Smith limit) also comes a topology, in the sense that the closed sets of a topology are exactly those sets that are closed under taking limits \cite{Kelley}.

A recurring problem when defining approximation is that we invariably encounter examples of sequences that we expect to have a certain limit, but within the theory turn out to have a different limit, or no limit at all. One reason for this is that some properties of the sequence would not be preserved in the expected limit (a reasonable argument). Another reason is that the model of execution is not rich enough to express a certain limit (a flaw that we hope to overcome).

An even bigger problem, is that in some cases we can think of multiple reasonable limits for a given sequence, while a metric or (Hausdorff) topology allows us to pick only one! In figure \ref{fig:FanSequence1}, we show an example in which one sequence obtains two different limits, depending on how each system evolves into the next. If we interpret the sequence as gaining an additional strand of length $n$ on every $n$'th day, we obtain as a limit a system with strands of any possible finite length, depicted in figure \ref{fig:Limit1}. If we interpret the sequence as gaining an additional step on each strand every day, and an additional new strand of length $1$, this leads to infinitely many strands of infinite length, depicted in figure \ref{fig:Limit2}. In fact, many other interpretations are also conceivable, leading to even more possible limits.

One way to resolve this, is to assume a deterministic labeling of all executions of systems in the sequence, and take this labeling into account when defining limits. This is the approach taken in \cite{Ying}, where a general notion of limit turns out to have the Moore-Smith properties only if we restrict ourselves to deterministic systems. However, this approach leaves little room for abstraction. In \cite{Ying} it is proposed to use the notion of limit for non-deterministic processes as well, even though it then does not result in a topology. That proposal has certain disadvantages, discussed in more detail in the next section.

Another way to resolve the problem, is to use a refinement order on systems (see e.g. \cite{Morgan09}). If we can find a refinement order that is a \emph{complete partial order (cpo)}, we can define the limits of a (rising) sequence of systems to be any supremum of that sequence. In this way, multiple limits can in theory be defined, but these limits can not be related to each other. In our example, it is unclear what the type of relation should be, given that the elements in the sequence must be related and must all be smaller than the proposed limits, but the limits themselves should not be related. Clearly, the system in figure \ref{fig:Limit1} is in many ways 'smaller' than that in figure \ref{fig:Limit2}. At first sight, brewing up a refinement order will not help us here. But admittedly, we have not searched much further in this direction after we found out that the categorical approach was a promising solution.

In this paper, we recall how the categorical definition of projective limits allows us to take differences in interpretation into account by looking at the morphisms, i.e. the structure preserving maps, between the systems in the sequence. Using projective limits we avoid the implicit choice for one particular interpretation forced by using a metric or topology. In the category we develop, all projective limits turn out to exist, but we leave it as future work to determine which properties carry over to these limits, as this depends also on the further properties of the morphisms that are used in the sequence.

As a category, we consider executions of systems under their natural prefix order (see also \cite{Cuijpers13a}), and find that partial history preserving maps can be used to represent the relation between compositions of behavioral systems and their components. The projective limit of a sequence of such partial maps then gives us the result of a sequence of refinements by composition. We contribute to the theory of dynamical, computational, and hybrid systems, by using a model of behavior that is independent of the model of time one has in mind. As a result, our notion of projective limit not only resolves differences in interpretation, but also allows for continuous time behavior to be represented as the limit of a sequence of discrete time refinements. Using projective limits, we can turn a sequence of computation trees into a continuous system that is not a tree anymore.

While reading this paper, a certain familiarity with category theory \cite{LawvereSchanuel,MacLane} is necessary to understand all technicalities. However, we have attempted to present the material in such a way that the general intuition can be understood with only a basic awareness of category theory. Also, the reader should note that this is not an abstract category theory paper, in that we do not strive for new category theoretic insights (like for example \cite{AbramskyGayNagarajan96}).

Our work certainly relates to the category theoretic work of \cite{CorradiniMontanariRossi96} who consider morphisms between graphs, and \cite{WinskelNielsen97,Cattani99} who use the notion of a presheave to capture histories over a category. But the graph approach considers a path as a composition of steps from one vertex, or state, to another, and in \cite{CuijpersReniers08} we showed that continuous behavior can only be represented in a step-based fashion if it is \emph{finite set refutable} (see also section \ref{sec:continuous}). Therefore, graphs are not expressive enough for our purposes. In contrast, presheaves contain much more information about a system than prefix orders do. Prefix orders only remember the behavior of a single dynamical system upto a certain point in time, and morphisms into other systems are necessary to convey the idea of composition. In a presheave, an object maps to all possible histories that lead to this object, and multiple objects may map the the same set of histories to indicate that those states are executed in parallel. In this paper, we take the view that the semantics of a system consists only of the dynamics. Therefore, the presheave view is too strong for us.

The further structure of this paper is as follows. First, we discuss the notion of limit introduced in \cite{Ying} as an example of a topological approach to defining limits and its inherent difficulties. Secondly, we introduce the category of prefix orders and partial history preserving maps as a way to model the relation between behavior of compositions and components of a dynamical system. Thirdly, we characterize projective limits in this category; fourthly, we apply this characterization to the example discussed above; fifthly, to the examples discussed in the second section; and finally, we show how continuous behavior can be obtained as the projective limit of a sequence of discrete behaviors.

\vspace{-0.2cm}

\section{Bisimulation topologically} \label{sec:ying}

In \cite{Ying} a first attempt is made to define topologically what the limit of a sequence, or more generally a net, of processes is. It is also observed in that book that the proposed definition has certain problems. In this section, we will briefly recast the definitions of \cite{Ying} to transition systems, and recall the problems with those definitions. In this paper, we do not delve deeply into other topological approaches to the theory of computation (see e.g. \cite{vanBreugel,vanBreugel05,Henzinger10,Huth05,PerrinPin}), but all those approaches at least share the problem that we would like to be able to specify multiple distinguishable limits of the same sequence.

\begin{definition}[Labeled transition system]
A \emph{labeled transition system} is a tuple $\langle X,A,i,\rightarrow \rangle$, consisting of a set of states $X$, a set of observables $A$, an initial state $i \in A$, and a transition relation $\rightarrow \subseteq X \times A \times X$. Given $a \in A$ we write $\trans{x}{a}{x'}$ for $(x,a,x') \in \rightarrow$. A labeled transition system is \emph{deterministic} if for each $x \in \X$ and $a \in A$ there is at most one $x' \in X$ such that $\trans{x}{a}{x'}$.
\end{definition}

\begin{definition}[Bisimulation]
Two labeled transition systems $\langle X,A,i,\rightarrow \rangle$ and $\langle Y,A,j,\rightarrow \rangle$ are called \emph{bisimilar} if there exists a relation $\mathcal{R} \subseteq X \times Y$ between the sets of states $X$ and $Y$ such that:
\begin{itemize}
\item $i \, \mathcal{R} \, j$;
\item If $x \ \mathcal{R} \ y$ and $\trans{x}{a}{x'}$, then there exists a $y'$ such that $x' \ \mathcal{R} \ y'$ and $\trans{y}{a}{y'}$;
\item If $x \ \mathcal{R} \ y$ and $\trans{y}{a}{y'}$, then there exists a $x'$ such that $x' \ \mathcal{R} \ y'$ and $\trans{x}{a}{x'}$.
\end{itemize}
\end{definition}

\begin{definition}[Directed family]
A \emph{directed set} $\langle \mathbb{I}, \preceq \rangle$ consists of an \emph{index set} $\mathbb{I}$ and a \emph{relation} $\preceq \subseteq \mathbb{I} \times \mathbb{I}$ that is:
\begin{itemize}
\item reflexive: $\forall_{a \in \mathbb{I}} \ a \preceq a$;
\item transitive: $\forall_{a,b \in \mathbb{I}} \ a \preceq b \ \wedge \ b \preceq c \ \Rightarrow \ a \preceq c$;
\item directed: $\forall_{a,b \in \mathbb{I}} \exists_{c \in \mathbb{I}} \ a \preceq c \ \wedge \ b \preceq c$.
\end{itemize}
A subset $\mathbb{C} \subseteq \mathbb{I}$ is \emph{cofinal} if:
\begin{itemize}
\item cofinality: $\forall_{a \in \mathbb{I}} \exists_{n \in \mathbb{C}} \ a \preceq n$;
\end{itemize}
A \emph{net} (or generalized Moore-Smith sequence) is a family $\{x_i \mid i \in \mathbb{I} \}$ indexed by a directed set $\mathbb{I}$. A \emph{subnet} is then the family $\{x_i \mid i \in \mathbb{C} \}$ over a cofinal subset $\mathbb{C} \subseteq \mathbb{I}$.
\end{definition}

\begin{definition}[Limit bisimulation] \label{def:limitbisim}
Given a directed set $\langle \mathbb{K}, \preceq \rangle$ and a net of labeled transition systems $\{ \langle X_k, A, i_k, \rightarrow_k \rangle \mid k \in \mathbb{K} \}$, a transition system $\langle X_\omega,A,i_\omega,\rightarrow_\omega \rangle$ is called \emph{limit bisimilar} to the net if there exists a relation $\mathcal{R}$ between states in $X_\omega$ and nets of the form $\{ x_k \in X_k \mid k \in \mathbb{C} \}$ with $\mathbb{C}$ a cofinal subset of $\mathbb{K}$ such that:
\begin{itemize}
\item $i_\omega \, \mathcal{R} \, \{ i_k \mid k \in \mathbb{K} \}$;
\item If $x \ \mathcal{R} \ \{ x_k \mid k \in \mathbb{C} \}$ and $\trans{x}{a}{x'}$ then there exists a $n_0 \in \mathbb{C}$ and a net $\{ x'_k \mid n_0 \leq k \in \mathbb{C} \}$ such that $x' \ \mathcal{R} \ \{ x'_k \mid k \in \mathbb{C} \}$ and for all $k \geq n_0$ we find $\trans{x_k}{a}{x'_k}$;
\item If $x \ \mathcal{R} \ \{ x_k \mid k \in \mathbb{C} \}$ and $\mathbb{D}$ is a cofinal subset of $\mathbb{C}$ and $\trans{x_l}{a}{x'_l}$ for all $l \in \mathbb{D}$, then there exists a $x'$ and a cofinal subset $\mathbb{B}$ of $\mathbb{D}$ such that $x' \ \mathcal{R} \ \{ x_l \mid l \in \mathbb{B} \}$ and $\trans{x}{a}{x'}$.
\end{itemize}
\end{definition}

\paragraph{Problem 1: limits of constant nets}
An important problem with definition \ref{def:limitbisim} that is already acknowledged in \cite{Ying}, is that a given labeled transition system $\langle X,A,i,\rightarrow \rangle$ is not always limit bisimilar to a constant net $\{ \langle X,A,i,\rightarrow \rangle \mid n \in \mathbb{I} \}$. Roughly speaking, it is shown in \cite{Ying} that a constant net will only be limit bisimilar to its value if the `cardinality of the non-determinism' is strictly smaller than the `cardinality of the net' (please refer to \cite{Ying} for the precise definitions). As an example, if we interpret figure \ref{fig:Limit1} as a non-deterministic labeled transition system (just apply the same label to each arrow), then the non-determinism of this system has cardinality $\omega$, and it is not limit bisimilar to the constant sequence over this transition system indexed by $\N$.

\begin{figure}[ht]
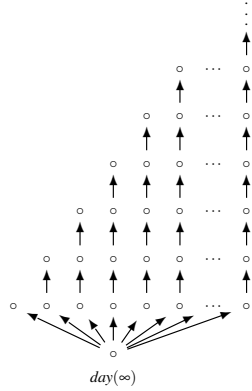

\vspace{-0.4cm}
\centering
\begin{tiny}
\begin{diagram}[height=1.5em,width=2.1em]
      &       &       &       &       &       &        & \vdots \\
      &       &       &       &       &       &        & \uTo \\
      &       &       &       &       & \circ & \cdots & \circ \\
      &       &       &       &       & \uTo  &        & \uTo \\
      &       &       &       & \circ & \circ & \cdots & \circ \\
      &       &       &       & \uTo  & \uTo  &        & \uTo \\
      &       &       & \circ & \circ & \circ & \cdots & \circ \\
      &       &       & \uTo  & \uTo  & \uTo  &        & \uTo \\
      &       & \circ & \circ & \circ & \circ & \cdots & \circ \\
      &       & \uTo  & \uTo  & \uTo  & \uTo  &        & \uTo \\
      & \circ & \circ & \circ & \circ & \circ & \cdots & \circ \\
      & \uTo  & \uTo  & \uTo  & \uTo  & \uTo  &        & \uTo \\
\circ & \circ & \circ & \circ & \circ & \circ & \cdots & \circ\\
& \luTo(3,2) & \luTo(2,2) & \luTo(1,2) \uTo \ruTo(1,2) & \ruTo(2,2) & \ruTo(3,2) & \ruTo(4,2)\\
      &       &         &          \circ             &            &            & \\
      &       &         &     \mathit{day}(\infty)   &            &            &
\end{diagram}
\end{tiny}
\vspace{-0.3cm}
\begin{small}
\caption{Limit bisimulation of figure \ref{fig:FanSequence1} as well as of the constant sequence over figure \ref{fig:Limit1}.}%
\label{fig:Limit3}%
\end{small}
\vspace{-0.2cm}
\end{figure}

An actual limit of this constant sequence is depicted in figure \ref{fig:Limit3}, which is similar to that of figure \ref{fig:Limit1} but with an added infinite strand of transitions. To see that this is indeed a limit, let us represent the initial state with $\bot$, and all other states in figure \ref{fig:Limit1} by pairs $(k,l) \in \N \times \N$ such that $l \leq k$. All transitions are then of the form $\trans{\bot}{}{(k,0)}$ or $\trans{(k,l)}{}{(k,l+1)}$. Furthermore, in the system in figure \ref{fig:Limit3} we have in addition a transition $\trans{\bot}{}{(\omega,0)}$ and transitions $\trans{(\omega,l)}{}{(\omega,l+1)}$, where $\omega$ is the first infinite ordinal. To see that figure \ref{fig:Limit3} is limit bisimilar to the constant sequence over figure \ref{fig:Limit1}, the reader may verify that we indeed get a witnessing limit bisimulation if we take the relation $\mathcal{R}$ such that $x \, \mathcal{R} \,\{x_k \mid k \in \K \}$ if and only if $\K$ is a cofinal subset of $\N$ and
\begin{itemize}
\item $x = \bot$ and $x_k = \bot$ for all $k \in \K$, or
\item $x = (k,l)$ and $\exists_{n \in \N}\forall_{n \leq m \in \K} \ x_m = (k,l)$, or
\item $x = (\omega,l)$ and $\exists_{n \in \N}\forall_{n \leq m \in \K}\exists_{k \in \N} \ x_m = (k,l)$ and $\forall_{k \in \N}\exists_{m \in \K}\exists_{k' \geq k} \ x_m = (k',l)$;
\end{itemize}

Incidentally, using a similar construction we find that the transition system in figure \ref{fig:Limit3} is also limit bisimilar to the sequence in figure \ref{fig:FanSequence1}. Furthermore, in \cite{Ying} it is explicitly noted that the transition systems in figure \ref{fig:Limit1} and \ref{fig:Limit2} are not limit bisimilar to that sequence. Especially for the limit in figure \ref{fig:Limit1} this is acknowledged to be a problem in \cite{Ying}, which we are about to resolve.

\paragraph{Problem 2: introduction of undesired behavior}

A second problem we experience with the notion of limit bisimulation of \cite{Ying}, is that it sometimes introduces new behavior that cannot be explained as resulting from a sequence of implementations. For example, if we take a constant sequence indexed by $\N$ over the the labeled transition system in figure \ref{fig:constant lts}, this sequence turns out to have two limits that are not bisimilar. One limit is, of course, again figure \ref{fig:constant lts} (the non-determinism is of cardinality 2, and as a witnessing relation, we relate every state $x$ to any constant subsequence taking the value $x$). Another limit, however, is the transition system in figure \ref{fig:limit of constant lts}. To see this, the reader may verify that we indeed get a witnessing limit bisimulation if we take the relation $\mathcal{R}$ such that $x \mathcal{R} \{x_k \mid k \in \K \}$ if and only if $\K$ is a cofinal subset of $\N$ and
\begin{itemize}
\item $x \in \{1,2,3,4,5\}$ and $\exists_{n \in \N}\forall_{n \leq m \in \K} \ x_m = x$, or
\item $x = *$ and $\exists_{n \in \N}\forall_{n \leq m \in K} \ x_m \in \{2,3\}$ and $\forall_{n \in \N}\exists_{n \leq m \in \K} \ x_m = 2$ and $\forall_{n \in \N}\exists_{n \leq m \in \K} \ x_m = 3$;
\end{itemize}

Obviously, the labeled transition systems in figures \ref{fig:constant lts} and \ref{fig:limit of constant lts} are not bisimilar. If one would like to create a topological space, this is already bad in itself since it yields a space that is non-Hausdorff (i.e.\ equivalence and topological equivalence of points in the space do not coincide). In this paper, we are not necessarily looking for a Hausdorff topology. But even though we explicitly do want to allow multiple limits to the same sequence, we feel that the introduction of a delayed choice in figure \ref{fig:limit of constant lts} is highly undesirable, since there is no indication in the sequence that this delayed choice should be introduced.

\begin{figure}[ht]
\begin{minipage}[b]{0.45\linewidth}
\centering
\begin{tiny}
\begin{diagram}[height=2em,width=2.1em]
4   &          &   &           & 5 \\
\uTo{b} &          &   &           & \uTo{c} \\
2   &          &   &           & 3 \\
        & \luTo{a} &   & \ruTo{a}  & \\
        &          & 1           &
\end{diagram}
\end{tiny}
\begin{small}
\caption{A constant sequence...}%
\label{fig:constant lts}%
\end{small}
\end{minipage}
\hspace{0.5cm}
\begin{minipage}[b]{0.45\linewidth}
\centering
\begin{tiny}
\begin{diagram}[height=2em,width=2.1em]
4   &          &   &      & 5 \\
\uTo{b} & \luTo{b} &   & \ruTo{c} & \uTo{c} \\
2   &          & * &  & 3 \\
        & \luTo{a} & \uTo{a}  & \ruTo{a}  & \\
        &          & 1    &
\end{diagram}
\end{tiny}
\begin{small}
\caption{... an undesired limit bisimulation.}%
\label{fig:limit of constant lts}%
\end{small}
\end{minipage}
\vspace{-0.5cm}
\end{figure}

\paragraph{Problem 3}
A third problem is that the notion of limit bisimulation of \cite{Ying} does not allow for notions of refinement. If we would like to consider a sequence in which we subsequently split steps into smaller steps, we would like to derive that, in the limit, this gives us a densely ordered set of events. This cannot be blamed on the choice of bisimulation alone, of course, since it also is inherent to the choice of labeled transition systems as a model of computation. \\

To circumvent all three problems, we go from a topological to a category theoretic approach, and switch from labeled transition systems to sets of executions under their natural prefix order. In the next sections, we discuss prefix orders, the category of our choice, in more detail. Subsequently, we study projective limits of nets of dynamical systems and maps between them in this category.

\section{The category of prefix orders and partial history preserving maps}

From this point on, we do not restrict ourselves to labeled transition systems anymore, but consider a generalized model of execution trees as our basic model of dynamics. Notably, we generalize the notion of execution tree to allow for multiple initial states, dense behavior, and even executions that have an infinite history. We do this by just considering the set of executions of a dynamical system, and the natural prefix ordering on those executions (see also \cite{Cuijpers13a}).

\begin{definition}[Prefix order] \label{def:prefix order}
A \emph{prefix order} $\langle \mathbb{U}, \preceq \rangle$ consists of a \emph{set of executions} $\mathbb{U}$ and a \emph{prefix relation} $\preceq \subseteq \mathbb{U} \times \mathbb{U}$ that is
\begin{itemize}
\item reflexive: $\forall_{a \in \mathbb{U}} \ a \preceq a$;
\item transitive: $\forall_{a,b \in \mathbb{U}} \ a \preceq b \ \wedge \ b \preceq c \ \Rightarrow \ a \preceq c$;
\item anti-symmetric: $\forall_{a,b \in \mathbb{U}} \ a \preceq b \ \wedge \ b \preceq a \ \Rightarrow \ a = b$;
\item downward total: $\forall_{a,b,c \in \mathbb{U}} \ (a \preceq c \ \wedge \ b \preceq c) \ \Rightarrow \ (a \preceq b \ \vee \ b \preceq a)$.
\end{itemize}
Two basic operations on executions in a prefix order are the \emph{downward closure} and \emph{upward closure}, determining the \emph{history} and \emph{future} of an execution $u \in \mathbb{U}$ by
\begin{itemize}
\item history: $u^- \triangleq \{ v \in \mathbb{U} \mid v \preceq u \}$;
\item future: $u^+ \triangleq \{ v \in \mathbb{U} \mid u \preceq v \}$;
\end{itemize}
We say a set $U \subseteq \mathbb{U}$ is \emph{prefix closed} if for all $u \in U$ we find $u^- \subseteq U$.
\end{definition}

Note, that if every history $u^-$ is \emph{well-ordered} instead of just totally ordered, we get a \emph{tree} in the sense of \cite{Kunen}. A particular example of such a tree, is of course the case in which every history $u^-$ is finite. In this paper, we are not satisfied considering only well-ordered histories, because we also wish to be able to capture continuous behavior.

To model the relation between composed dynamical systems and their components, we take partial history preserving maps between prefix-orders as morphisms for our category (see also \cite{Cattani99}). The usual composition and identity on partial maps then turns these partial history preserving maps into a category.

\begin{definition}[Partial history preserving maps]
A \emph{morphism} $f : \mathbb{U} \rightarrow \mathbb{V}$ between prefix ordered sets $\U$ and $\V$ is a partial function with a \emph{prefix closed domain} that is \emph{history preserving}, i.e.\ $\forall_{u \in \dom(f)} \ f(u^-) = f(u)^-$, with the obvious lifting $f(A) \triangleq \{ f(a) \mid a \in A \}$ of $f$ to subsets $A \subseteq \mathbb{U}$.
\end{definition}

As an example, consider the executions of a labeled transition system $\langle X,A,i,\rightarrow \rangle$. Observe that the set of strings $A^*$ is prefix ordered by writing $\sigma \preceq \sigma'$ if there exists a string $\sigma''$ such that $\sigma' = \sigma \cdot \sigma''$. Furthermore, we may model an execution, or \emph{run}, of the labeled transition system as a function $\rho : \sigma^- \rightarrow X$ from the history of some string $\sigma \in A^*$ such that $\rho(\epsilon) = i$ and for all $\xi \in A^*$ and $a \in A$ we find $\trans{\rho(\xi)}{a}{\rho(\xi \cdot a)}$. The set of all runs is denoted $\mathrm{Run}(\rightarrow)$, and in turn is prefix ordered by $\rho \preceq \rho'$ iff $\dom(\rho) \subseteq \dom(\rho')$ and $\rho(\xi) = \rho'(\xi)$ for all $\xi \in \dom(\rho)$. Finally, the observation function $\lambda : \mathrm{Run}(\rightarrow) \rightarrow A^*$ simply returns $\lambda(\rho) = \sigma$ whenever $\dom(\rho) = \sigma^-$. Incidentally, this function $\lambda$ is a partial (even total) history preserving map. This suggests that we can consider the alphabet of a transition system as a dynamical system itself, and as a `component' of the labeled transition system: its interface.

In \cite{Cuijpers13a} we have proven that two labeled transition systems $\langle X,A,i,\rightarrow \rangle$ and $\langle Y,A,j,\mapsto \rangle$ are bisimilar if and only if there exists a prefix order $\langle \U, \preceq \rangle$ together with a span (in the style of \cite{JoyalNielsenWinskel93}) of two total and surjective history and future preserving functions $f : \U \rightarrow \mathrm{Run}(\rightarrow)$ and $g : \U \rightarrow \mathrm{Run}(\mapsto)$ such that $\lambda(f(u)) = \lambda(g(u))$ for all $u \in \U$. Such maps model that one dynamical system can be considered a correct implementation of a more abstract specification, and the interpretation of bisimulation is that two specifications are equivalent if they have a common implementation. In this paper, we use a more general type of morphism, because not all relations between compositions and their components can be expected to be surjective, total, or future preserving. Indeed, not even the observation function $\lambda$ is surjective, nor is it future preserving. We only assume that there is a partial history preserving function from compositions to components of which the defined part is prefix closed.

Consider figure \ref{fig:composition}, in which one system (in the middle) is composed of two systems (on either side). The behavior of the two components (left and right) is synchronized until a choice is made to abandon one of the components permanently, while all runs along which a $d$ is observed are blocked. In particular, consider the partial history preserving maps from the composition to each of its components, and notice that they are not total (because some behavior is only related to `the other component'), and not surjective nor future preserving (because some behavior in the component is 'blocked'). However, at any point of execution in the composition it is possible to point out `where you are' in each of the components, unless a choice has been made that this component is no longer needed. Hence the maps are history preserving.

For the time being, we will assume that the semantics of a composition tells us how to construct such a map, somehow. We leave it as a topic of future research to find out how these maps can be created as part of, for example, a structured operational semantics.

\begin{figure}[ht]
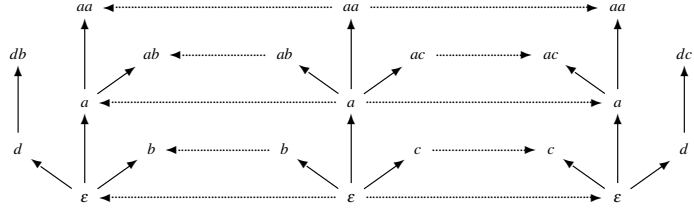

\centering
\vspace{-0.2cm}
\begin{tiny}
\begin{diagram}[height=1.5em,width=2.1em]
      &          & aa       &          &       & & \lDotsto & &       &          & aa       &          &       & & \rDotsto & &       &          & aa       &          &       \\
      &          &          &          &       & &          & &       &          &          &          &       & &          & &       &          &          &          &       \\
db    &          & \uTo     &          & ab    & & \lDotsto & & ab    &          & \uTo     &          & ac    & & \rDotsto & & ac    &          & \uTo     &          & dc    \\
      &          &          & \ruTo    &       & &          & &       & \luTo    &          & \ruTo    &       & &          & &       & \luTo    &          &          &       \\
\uTo  &          & a        &          &       & & \lDotsto & &       &          & a        &          &       & & \rDotsto & &       &          & a        &          & \uTo  \\
      &          &          &          &       & &          & &       &          &          &          &       & &          & &       &          &          &          &       \\
d     &          & \uTo     &          & b     & & \lDotsto & & b     &          & \uTo     &          & c     & & \rDotsto & & c     &          & \uTo     &          & d     \\
      & \luTo    &          & \ruTo    &       & &          & &       & \luTo    &          & \ruTo    &       & &          & &       & \luTo    &          & \ruTo    &       \\
      &          & \epsilon &          &       & & \lDotsto & &       &          & \epsilon &          &       & & \rDotsto & &       &          & \epsilon &          &
\end{diagram}
\end{tiny}
\begin{small}
\caption{Composition by synchronization, choice and blocking, with partial history preserving maps to its components.}%
\label{fig:composition}%
\end{small}
\vspace{-0.3cm}
\end{figure}

In figure \ref{fig:examples}, we give a number of examples of how partial history preserving maps can be used for modeling that one behavioral system is a refinement of another. In section \ref{sec:continuous}, we will use specific refinement maps as an example, when we show how to obtain continuous behavior as the limit of ever more refined discrete behavior.

\begin{figure}[h]
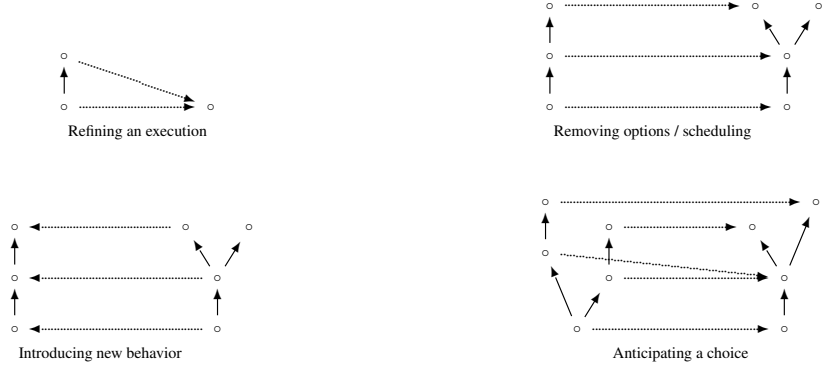

\vspace{-0.5cm}
\centering
\begin{minipage}[b]{0.45\linewidth}
\centering
\begin{tiny}
\begin{diagram}[height=1.6em,width=0.5em]
                                &&&&& \\
                                &&&&& \\
                                &&&&& \\
  \circ &           &             &&& \\
  \uTo  & \rdDotsto &             &&& \\
  \circ & \rDotsto  & \circ       &&& \\
  & \text{Refining an execution} &&&& \\
\end{diagram}
\end{tiny}
\end{minipage}
\begin{minipage}[b]{0.45\linewidth}
\centering
\begin{tiny}
\begin{diagram}[height=1.6em,width=1em]
          &           &       &      &       &       &       \\
    \circ & \rDotsto  & \circ &      &       &       & \circ \\
    \uTo  &           &       &\luTo &       & \ruTo &       \\
    \circ & \rDotsto  &       &      & \circ &       &       \\
    \uTo  &           &       &      & \uTo  &       &       \\
    \circ & \rDotsto  &       &      & \circ &       &       \\
          & \text{Removing options / scheduling} \\
\end{diagram}
\end{tiny}
\end{minipage}
\begin{minipage}[b]{0.45\linewidth}
\centering
\begin{tiny}
\begin{diagram}[height=1.6em,width=1em]
          &           &       &      &       &       &       \\
    \circ & \lDotsto  & \circ &      &       &       & \circ \\
    \uTo  &           &       &\luTo &       & \ruTo &       \\
    \circ & \lDotsto  &       &      & \circ &       &       \\
    \uTo  &           &       &      & \uTo  &       &       \\
    \circ & \lDotsto  &       &      & \circ &       &       \\
          & \text{Introducing new behavior} \\
\end{diagram}
\end{tiny}
\end{minipage}
\begin{minipage}[b]{0.45\linewidth}
\centering
\vspace{-0.2cm}
\begin{tiny}
\begin{diagram}[height=1.6em,width=1em]
  \circ &&       &&       & \rDotsto &       &&       && \circ & \\
  \uTo  &&       && \circ & \rDotsto & \circ &&       &\ruTo(2,3) && \\
  \circ &&       && \uTo  &          &&\luTo &        &        && \\
 & \luTo(2,3) \rdDotsto(8,1) &       && \circ & \rDotsto &       && \circ &&       & \\
 &       &       & \ruTo &&          &       && \uTo  &&       & \\
 &       & \circ &&       & \rDotsto &       && \circ &&       & \\
 &&      &&& \text{Anticipating a choice}         &&&&&& \\
\end{diagram}
\end{tiny}
\end{minipage}
\vspace{-0.3cm}
\begin{small}
\caption{Examples of history preserving maps from refinement to specification}%
\label{fig:examples}
\end{small}
\end{figure}

Finally, the following theorem shows a connection between partial history preserving maps and backward simulation that is useful further on in this paper.
\begin{theorem} \label{thm: backward sim}
A partial function $f : \U \rightarrow \V$ between prefix orders is history preserving and has a prefix-closed domain if and only if it is
\begin{itemize}
\item Order preserving: $\forall_{u,u' \in \dom(f)} \ u \preceq u' \ \Rightarrow \ f(u) \preceq f(u')$;
\item Backward simulation: $\forall_{u \in \dom(f)}\forall_{v \in \V} \ v \preceq f(u) \ \Rightarrow \ \exists_{u' \preceq u} \, f(u') = v$.
\end{itemize}
\end{theorem}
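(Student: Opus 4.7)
The plan is to split the biconditional into its two directions and handle each constituent clause separately.

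For the $(\Rightarrow)$ direction, suppose $f$ is history preserving and has prefix-closed domain. Order preservation falls out by rewriting $u \preceq u'$ as $u \in u'^{-}$: since $u, u' \in \dom(f)$ and $\dom(f)$ is prefix-closed, $u^{-} \subseteq \dom(f)$, so the identity $f(u'^{-}) = f(u')^{-}$ yields $f(u) \in f(u')^{-}$, i.e.\ $f(u) \preceq f(u')$. Backward simulation is equally direct: any $v \preceq f(u)$ lies in $f(u)^{-} = f(u^{-})$, and prefix-closedness of $\dom(f)$ places the preimage $u' \in u^{-}$ with $f(u') = v$ inside $\dom(f)$.

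For the $(\Leftarrow)$ direction, assume order preservation and backward simulation. History preservation at each $u \in \dom(f)$ decomposes into two set inclusions in $\V$: the inclusion $f(u^{-}) \subseteq f(u)^{-}$ is immediate from pointwise order preservation (recalling that $f(u^{-})$ is taken over $u^{-} \cap \dom(f)$), while the reverse inclusion $f(u)^{-} \subseteq f(u^{-})$ is exactly the existential statement of backward simulation.

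The delicate step is showing that $\dom(f)$ is prefix-closed. Given $u \in \dom(f)$ and $w \preceq u$, I would apply backward simulation to obtain, for every $v \preceq f(u)$, a witness $u_{v} \preceq u$ in $\dom(f)$ with $f(u_{v}) = v$. By downward totality of $\preceq$ on $u^{-}$, the witnesses $u_{v}$ and $w$ all sit on the single chain $u^{-}$, and order preservation forces $u_{v} \preceq u_{v'}$ whenever $v \preceq v'$. The plan is then to compare $w$ to this chain of witnesses via downward totality and argue that $w$ must itself be one of the $u_{v}$'s, hence in $\dom(f)$. Identifying the right $v$ so that $u_{v} = w$ — i.e.\ locating $w$ inside $f(u)^{-}$ using only the data of $f$ on $\dom(f)$ — is the main obstacle I anticipate in completing this direction.
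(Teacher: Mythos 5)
Your decomposition is the right one, and three of the four pieces are complete and correct: history preservation plus a prefix-closed domain yields order preservation and backward simulation exactly as you argue, and conversely the two bullet conditions give both inclusions of $f(u^-) = f(u)^-$ (with the lifting read over $u^- \cap \dom(f)$). The paper's own proof is the single word ``Straightforward,'' so there is nothing more detailed to compare against there.

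The obstacle you flag in the final step is not a defect of your argument but a genuine failure of the statement: order preservation and backward simulation do \emph{not} imply that $\dom(f)$ is prefix closed. Take $\U = \{a,b\}$ with $a \preceq b$, take $\V = \{x\}$ a single point, and let $f$ be defined only at $b$ with $f(b) = x$. Order preservation is vacuous on the one-element domain, and backward simulation holds because the only $v \preceq f(b)$ is $x$ itself, witnessed by $u' = b$; yet $a \preceq b$ while $a \notin \dom(f)$. The same example shows why your plan cannot be repaired: the witnesses $u_v$ produced by backward simulation need only cover $f(u)^-$, and nothing forces an arbitrary $w \preceq u$ to coincide with one of them. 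What is actually true --- and what your first three arguments already establish --- is that $f$ is history preserving (in the sense $f(u^-) = f(u)^-$ with the partial lifting) if and only if it is order preserving and a backward simulation; prefix-closedness of the domain is an independent condition that must be assumed on both sides or verified separately. This is not merely cosmetic for the rest of the paper: in the projective-limit theorem the maps $\pi_i$ are shown to satisfy only the two bullet conditions, so prefix-closedness of $\dom(\pi_i)$ still needs its own (easy) argument, e.g.\ via the prefix-closedness of $\dom(f_{i,j})$.
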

\begin{proof}
Straightforward.
\end{proof}
\vspace{-0.5cm}

\section{The construction of categorical limits}

Now, all preliminaries are in place to get to the definition of projective limit we propose to use to model limits of sequences of compositions.
We first recall the classical category theoretic notion of projective limit over an inverse directed family of morphisms \cite{MacLane}.

\begin{definition}[Inverse directed family]
A family $\{ f_{ij} : \U_j \rightarrow \U_i \mid i,j \in \mathbb{I} \ \wedge \ i \leq j \}$ of morphisms is called an \emph{inverse directed family} if $\mathbb{I}$ is a directed set and
\begin{itemize}
\item reflexive: $\forall_{i \in \mathbb{I}} \ f_{ii}$ is the identity function;
\item transitive: $\forall_{i,j,k \in \mathbb{I}, i \preceq j \preceq k} \ f_{ik} = f_{jk} \circ f_{ij}$.
\end{itemize}
With $\circ$ denoting the usual composition on (partial) functions.
\end{definition}

\begin{definition}[Projective limit]
Given an inverse directed family $\{ f_{ij} : \U_j \rightarrow \U_i \mid i,j \in \mathbb{I} \ \wedge \ i \leq j \}$, its \emph{projective limit} is an object $\V$ together with a family of morphisms $\{ \pi_{i} : \V \rightarrow \U_i \mid i \in \mathbb{I} \}$ (called projections) such that $\forall_{i \leq j} \ \pi_i = f_{ij} \circ \pi_j$, and for any other family $\{ \rho_{i} : \W \rightarrow \U_i \mid i \in \mathbb{I} \}$ of morphisms satisfying $\forall_{i \leq j} \ \rho_i = f_{ij} \circ \rho_j$ there exists a unique morphism $u : \W \rightarrow \V$ such that $\forall_{i \in \mathbb{I}} \, \rho_i = \pi_i \circ u$.
\end{definition}

\noindent Then we characterize this notion for the category of prefix orders and partial history preserving maps.

\begin{definition}[Limit execution]
Given an inverse directed family $\{ f_{ij} : \U_j \rightarrow \U_i \mid i,j \in \mathbb{I} \ \wedge \ i \leq j \}$ of partial history preserving maps, a non-empty set $H \subseteq \biguplus_{i \in \mathbb{I}} \U_i$ (with $\biguplus$ denoting disjoint union of sets) is a \emph{limit execution} if
\begin{itemize}
\item $\forall_{u,v \in H} \forall_{i,j \in \mathbb{I}} \ ( i \leq j \ \wedge \ u \in \U_i \ \wedge \ v \in \U_j ) \ \Rightarrow \ (f_{i,j}(v) = u)$;
\item $\forall_{u \in H} \forall_{i,j \in \mathbb{I}} \ (i \leq j \ \wedge \ u \in \dom(f_{i,j})) \ \Rightarrow \ ( f_{i,j}(u) \in H )$;
\item $\forall_{u \in H} \forall_{i,j \in \mathbb{I}} \ (i \leq j \ \wedge \ u \in \U_i) \ \Rightarrow \ ( \exists_{v \in \U_j} \ v \in H )$.
\end{itemize}
Observe that for every $i \in \mathbb{I}$ the set $H \cap \U_i$ contains at most one element, hence we can define $\pi_i(H)$ as the partial map that returns this element.
Finally, for two limit executions $H$,$K$ we define $H \sqsubseteq K$ if and only if $\pi_i(H) \preceq_i \pi_i(K)$ whenever both $H,K \in \dom(\pi_i)$.
\end{definition}

Note how, if we depict an inverse directed family graphically as in figures \ref{fig:PrefixSequence1} and \ref{fig:PrefixSequence2}, the limit executions become those 'horizontal connecting lines' that stretch into infinity. The first bullet in the definition ensures that, in a limit execution, there is at most one point of execution for each prefix order in the family and that all points are `connected' by the maps in the family. The second bullet ensures that all limit executions 'stretch out' to the left as far as possible, while the third bullet ensures that all limit executions 'stretch out' into infinity.

\begin{theorem}
The set of limit executions of an inverse directed family of partial history preserving maps $\{ f_{ij} : \U_j \rightarrow \U_i \mid i,j \in \mathbb{I} \ \wedge \ i \leq j \}$, denoted by $\underleftarrow{\lim} f_{ij}$, is prefix ordered by $\sqsubseteq$, and the partial maps $\pi_i$ are history preserving. Together they form the projective limit of the family.
\end{theorem}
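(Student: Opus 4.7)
The plan is to verify four items in turn: (i) $\sqsubseteq$ is a prefix order on $\underleftarrow{\lim} f_{ij}$; (ii) each $\pi_i$ is a partial history-preserving map with prefix-closed domain; (iii) the cone condition $\pi_i = f_{ij} \circ \pi_j$ holds for $i \leq j$; and (iv) the universal property of the projective limit.

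For (i), reflexivity and transitivity of $\sqsubseteq$ follow from those of the $\preceq_i$, but antisymmetry and downward totality use the directed structure of $\mathbb{I}$. Antisymmetry: given $H \sqsubseteq K$ and $K \sqsubseteq H$, and any $u \in H \cap \U_i$, I use bullet~3 and directedness to find $j \geq i$ with $H, K \in \dom(\pi_j)$, so $\pi_j(H) = \pi_j(K)$; bullet~2 applied to $f_{ij}$ then gives $u \in K$. Downward totality: for $H, L \sqsubseteq M$, at every common index the values $\pi_i(H), \pi_i(L)$ are comparable by totality of $\preceq_i$, and the ``direction'' of comparison cannot flip between two indices $i, i'$, because at any $j \geq i, i'$ order-preservation of $f_{ij}$ and $f_{i'j}$ would otherwise force $\pi_j(H) = \pi_j(L)$ and hence equality at both $i$ and $i'$, contradicting a genuine flip.

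For (ii), prefix-closure of $\dom(\pi_i)$ uses prefix-closure of each $\dom(f_{ij})$: given $K \sqsubseteq H$ with $H \in \dom(\pi_i)$, pick $j \geq i$ with $H, K \in \dom(\pi_j)$; then $\pi_j(K) \preceq_j \pi_j(H) \in \dom(f_{ij})$ forces $\pi_j(K) \in \dom(f_{ij})$, and bullet~2 yields $f_{ij}(\pi_j(K)) \in K \cap \U_i$. History preservation reduces to the inclusion $\pi_i(H)^- \subseteq \pi_i(H^-)$: given $v \preceq_i \pi_i(H)$, I build a limit execution $K \sqsubseteq H$ with $\pi_i(K) = v$. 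Theorem~\ref{thm: backward sim} applied to each $f_{ij}$ with $j \geq i$ produces a non-empty interval $T_j \subseteq \pi_j(H)^-$ of elements mapping to $v$, with $f_{jj'}(T_{j'}) \subseteq T_j$ for $i \leq j \leq j'$; a coherent thread $(k_j)_{j \geq i}$ through this directed family is extracted using Zorn's lemma, and $K$ is completed by adjoining the $f$-images in the remaining coordinates. The three bullets for $K$ then hold by construction. Item (iii) is immediate from bullet~1.

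For (iv), given a cone $\{\rho_i : W \to \U_i\}$, the mediator is forced by $\pi_i \circ u = \rho_i$ to take the value $u(w) = \{\rho_i(w) \mid w \in \dom(\rho_i)\}$. The cone condition, together with the consequent inclusion $\dom(\rho_i) \subseteq \dom(\rho_j)$ for $i \leq j$, makes $u(w)$ satisfy the three bullets of a limit execution, and $\dom(u) = \bigcup_i \dom(\rho_i)$ is prefix-closed. History preservation of $u$ is handled by the same backward-simulation-plus-coherent-choice argument as in (ii), this time applied to the $\rho_i$'s in place of the $f_{ij}$'s. Uniqueness of $u$ is built into the definition. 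The hard part will be the $\supseteq$ inclusion in the history-preservation step of both (ii) and (iv): producing a single limit execution that realizes a prescribed prefix of the image requires a coherent backward lift across the whole directed family, and this is where backward simulation (Theorem~\ref{thm: backward sim}), prefix-closure of the $f_{ij}$ domains, and a choice principle all combine; everything else is essentially bookkeeping once these pieces are in place.
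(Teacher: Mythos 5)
Your proposal is correct and follows essentially the same route as the paper: verify the prefix-order axioms for $\sqsubseteq$ using directedness to pass to a common index, establish history preservation of the $\pi_i$ via Theorem~\ref{thm: backward sim} together with a choice-based coherent backward lift across the family, and obtain the mediating morphism pointwise as $w \mapsto \{\rho_i(w) \mid i \in \mathbb{I}\}$ with uniqueness from antisymmetry. You are somewhat more explicit than the paper on two points it glosses over (prefix-closure of $\dom(\pi_i)$, and why the direction of comparison cannot flip across indices in the downward-totality check), but these are refinements of the same argument rather than a different approach.
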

\begin{proof}
Firstly, we prove that $\sqsubseteq$ indeed gives a prefix order. For this, let $H,H',H'' \in \underleftarrow{\lim} f_{i,j}$.
\begin{itemize}
\item \textbf{Reflexivity:} It trivially follows that $H \sqsubseteq H$.
\item \textbf{Transitivity:} Assume $H \sqsubseteq H'$ and $H' \sqsubseteq H''$. Furthermore, assume that $\pi_i(H)$ and $\pi_(H'')$ are both defined.
For sufficiently large $j \geq i$ we know that $\pi_j(H) \preceq_j \pi_j(H')$ and $\pi_j(H') \preceq_j \pi_j(H'')$, hence by transitivity of $\preceq_j$ we have $\pi_j(H) \preceq_j \pi_j(H'')$. Furthermore, we know that $f_{i,j}$ is history preserving hence order preserving, and that $\pi_i(H) = f_{i,j}(\pi_j(H))$ and $\pi_i(H'') = f_{i,j}(\pi_j(H''))$, so $\pi_i(H) \preceq_i \pi_(H'')$.
\item \textbf{Anti-symmetry:} Assume $H \sqsubseteq H'$ and $H' \sqsubseteq H$, for $H,H' \in \underleftarrow{\lim} f_{i,j}$. Then whenever $\pi_i(H)$ and $\pi_i(H')$ are both defined we find $\pi_i(H) \preceq_i \pi_i(H')$ and $\pi_i(H') \preceq_i \pi_i(H)$, so by anti-symmetry $\pi_i(H) = \pi_i(H')$ for sufficiently large $i$. Finally, observe that if $\pi_i(H) = \pi_i(H')$ for some $i$, then (by the second bullet in the definition of limit execution) this also holds for all smaller $j \leq i$, so $H = H'$.
\item \textbf{Downward totality:} Assume $H \sqsubseteq H''$ and $H' \sqsubseteq H''$. For sufficiently large $i$ we know that $\pi_i(H)$, $\pi_i(H')$ and $\pi_i(H'')$ are all defined, and so $\pi_i(H) \preceq_i \pi_i(H'')$ and $\pi_i(H') \preceq_i \pi_i(H'')$. By downward totality of $\preceq_i$ we then find that $\pi_i(H) = \pi_i(H')$, and by similar reasoning as before $H = H'$.
\end{itemize}
Secondly, we prove that the partial maps $\pi_i$ are history preserving using theorem \ref{thm: backward sim}.
\begin{itemize}
\item \textbf{Order preserving:} Trivial by construction of $\pi_i$;
\item \textbf{Backward simulation:} Assume $\pi_i(H) \preceq v_i$ for some $v_i \in \U_i$. Then for all $j \geq i$ we know that $f_{i,j}$ is history preserving and that $f_{i,j}(\pi_j(H)) = \pi_i(H)$, hence there exists arbitrarily large $j$ and $v_j \in \U_j$ with $v_j \preceq_j \pi_j(H)$ and $f_{k,j}(v_j) = v_k$ for all $i \leq k \leq j$. Using the axiom of choice, we thus construct an $H' \in \underleftarrow{\lim} f_{i,j}$ such that $\pi_i(H') = v_i$ and $H' \sqsubseteq H$.
\end{itemize}
Finally, by construction we have $f_{i,j}(\pi_j(H)) = \pi_i(H)$, for all $i \leq j \in \I$ and $H \in \underleftarrow{\lim} f_{i,j}$, so we only need to verify that $\underleftarrow{\lim} f_{i,j}$ is indeed the smallest candidate for the categorical limit.
\begin{itemize}
\item \textbf{Categorical limit:} Assume we have a family $\{ \rho_{i} : \W \rightarrow \U_i \mid i \in \mathbb{I} \}$ of partial history preserving maps satisfying $\rho_i(w) = f_{ij}(\rho_j(w))$ for all $i \leq j \in \I$ and $w \in \W$. Then we find that $H(w) = \{ \rho_i(w) \mid i \in \I \}$ returns a limit execution, satisfying $\rho_i(w) = \pi_i(H(w))$ by construction. It is easy to verify that $H(w)$ is total (hence has a prefix-closed domain) and history preserving. Furthermore, let $H'(w) : \U \rightarrow \underleftarrow{\lim} f_{ij}$ be any other partial history preserving map such that $\rho_i(w) = \pi_i(H'(w))$ for all $i \in I$, then we find that $\pi_i(H'(w)) = \pi_i(H(w))$ for all $i$, hence $H(w) \sqsubseteq H'(w)$ and $H'(w) \sqsubseteq H(w)$, and by antisymmetry, $H(w) = H'(w)$.
\end{itemize}
\end{proof}

Now that we know how to construct projective limits in the category of prefix orders, we can apply this construction to the examples given thoughout this paper.

\section{One sequence, two limits}

Before, we interpreted the figures \ref{fig:FanSequence1}, \ref{fig:Limit1}, \ref{fig:Limit2} and \ref{fig:Limit3} as labeled transition systems. However, they already have a tree structure, so we can interpret them directly as prefix orders as well. In the introduction we explained how the sequence in figure \ref{fig:FanSequence1} can be interpreted in multiple ways. One way is to say that on each day a new strand is added to the system, leaving the existing executions as they are. In figure \ref{fig:PrefixSequence1} we have modeled this by mapping all executions back to their originals, except for the new strand, which remains unmapped. Another way is to say that on each day each strand is lengthened by one step and a new strand of length $1$ is added. This is modeled by the maps in figure \ref{fig:PrefixSequence2}.

\begin{figure}[h]
\centering
\begin{tiny}
\begin{diagram}[height=0.8em,width=1em]
      &&&&&&          &&      &&&&&&&          &&      &&&&&&&          &&        &&    &&&&&&&&& & &&         &&&          & \cdots\\
      &&&&&&          &&     &&&&&&&&           &      &&&&&&&&          &        &&  &&&&&&&          &      &&    && \circ          &&& \ldDotsto(4,1) &  \\
      &&&&&&          &&     &&&&&&&          &&      &&&&&&&&&  &  &&&&&&  &&&     &&& & \uTo         &&&          &  \\
      &&&&&&          &&      &&&&&&&          &&      &&&&&&&          &&       && &&&&&&& & &&    &&           &&&          & \cdots\\
      &&&&&&          &&      &&&&&&&          &&      &&&&&&&&          &       &&    &&&&&&&          &      && \circ  &&            &&& \ldDotsto(6,1) &  \\
      &&&&&&          &&      &&&&&&&         &&  &&&&&&& & &   &&   &&&&&&&&& \ldDotsto(10,2)& \uTo &&           &&&          & \cdots\\
      &&&&&&          &&     &&&&&&&&           &    &&&&&&&&          &        && \circ  &&&&&&&          &      &&    && \circ          &&& \ldDotsto(4,1) &  \\
      &&&&&&          && &&&&&&&& &  &&&&&&&&& & & \uTo &&&&&&  &&     &&&  & \uTo         &&&          & \\
      &&&&&&          &&      &&&&&&&          &&      &&&&&&&       &&&& &&&&&&      &&      &&    &&           &&&          & \cdots\\
      &&&&&&          &&      &&&&&&&          &&      &&&&&&&         &&        &&    &&&&&&&          &   \circ  &&    &&            &&& \ldDotsto(8,1) &  \\
      &&&&&&          &&      &&&&&&& && &&&&&&&      &&       && &&&&&&&\ldDotsto(10,2)& \uTo &&    &&           &&&          & \cdots \\
      &&&&&&          &&      &&&&&&&        &&      &&&&&&&&          &     \circ  &&    &&&&&&&          &      && \circ  &&            &&& \ldDotsto(6,1) &  \\
      &&&&&&&&&&&&&&  & &&                &&&&&&& &\ldDotsto(9,2)&   \uTo &&  &&&&&&&&& \ldRel
      (10,2) & \uTo &&           &&&          & \cdots\\
      &&&&&&          &&     &&&&&&&&           &   \circ  &&&&&&&&          &        && \circ  &&&&&&&          &      &&    && \circ          &&& \ldDotsto(4,1) &  \\
      &&&&&&& &            &&&&&&&&& \uTo &&&&&&&&&&  & \uTo &&&&&&  &&     &&&  & \uTo        &&&          & \cdots\\
      &&&&&&                 &&      &&&&&&&          &&      &&&&&&&          &&        &&    &&&&&&       \circ &&      &&    &&            &&& \ldDotsto(10,1) &  \\
      &&&&&&                 &&      &&&&&&&          &&      &&&&&&&    &&  &&&&&&&\ldDotsto
      (10,2)&& \luTo(2,8) &      &&    &&           &&&          & \cdots\\
      &&&&&&                 &&      &&&&&&&          &&      &&&&&&&       \circ &&        &&    &&&&&&&          &   \circ &&    &&            &&& \ldDotsto(8,1) &  \\
      &&&&&&                               &&    &&&&&&&&&&&&&&& \ldDotsto(9,2) && \luTo(2,6) &        &&&&&&&&&\ldDotsto(10,2)&      &&    &&           &&&          & \cdots\\
      &&&&&&                &&      &&&&&&&       \circ && &&&&&&&&          &     \circ &&    &&&&&&&          &      && \circ &&            &&& \ldDotsto(6,1) &  \\
      &&&&&&&       &       &&&&&& \ldDotsto(7,2) && \luTo(2,4) &    &&&&&&&& \ldDotsto(9,2) &  &&    &&&&&&    &&& \ruTo(2,4) \ldDotsto(10,2) &&&&&&          & \cdots\\
      &&&&&&&                & \circ &&&&&&&&            &  \circ  &&&&&&&&          &        && \circ &&&&&&&          &      &&    && \circ         &&& \ldDotsto(4,1) &  \\
      &&&&&&&             & \uTo  &&&&&&&&& \uTo &&&&&&&&& \uTo &  \ruTo(2,2) &    &&&&&&  && \uTo &&&  \ruTo(4,2) &            &&&          &  \\
\circ &&&&&&& \lDotsto       & \circ &&&&&&&& \lDotsto &     \circ  &&&&&&&& \lDotsto & \circ     &&    &&&&&&& \lDotsto & \circ   &&    &&            &&& \lDotsto & \cdots \\
\end{diagram}
\end{tiny}
\vspace{-0.2cm}
\begin{small}
\caption{Partial maps obtained by adding a new strand everyday.}%
\label{fig:PrefixSequence1}%
\end{small}
\centering
\vspace{-0.2cm}
\begin{tiny}
\begin{diagram}[height=0.8em,width=1em]
   &&&&&&           &&      &&&&&&&&           &      &&&&&&&&          &        &&   &&&&&&&&&  &      &&    && \circ  & \lDotsto & \cdots \\
   &&&&&&           &&       &&&&&&&          &&      &&&&&&&&& &&  &&&&&&&&  && &&&   & \uTo &          & \\
   &&&&&&           &&       &&&&&&&          &&      &&&&&&&&          &       &&    &&&&&&&&&  &      && \circ  &&    & \lDotsto & \cdots \\
   &&&&&&           &&       &&&&&&&          && &&&&&&&&&    &&    &&&&&&&&    &&&  & \uTo &&    &          & \\
   &&&&&&           &&      &&&&&&&&           &     &&&&&&&&  &        && \circ  &&&&&&&&& \lDotsto &      &&    && \circ  & \lDotsto & \cdots \\
   &&&&&&           &&  &&&&&&&&&  &&&&&&&&&         &  & \uTo &&&&&&&&  &&     &&&            & \uTo &          & \\
   &&&&&&           &&       &&&&&&&          &&         &&&&&&&         &&        &&    &&&&&&&&& &   \circ  &&    &&    & \lDotsto & \cdots \\
   &&&&&&           &&       &&&&&&&     &&&&&&&&&      &&        &&    &&&&&&&& &  & \uTo &&    &&    &          & \\
   &&&&&&           &&       &&&&&&&         &&        &&&&&&&&  &     \circ  &&    &&&&&&&&& \lDotsto &      && \circ  &&    & \lDotsto & \cdots \\
   &&&&&&&&      &&&&&&&      &&         &&&&&&&          &  &   \uTo &&    &&&&&&&&    &&&            & \uTo &&    &          & \\
   &&&&&&           &&    &&&&&&&&   &   \circ    &&&&&&&& \lDotsto &        && \circ  &&&&&&&&& \lDotsto &      &&    && \circ  & \lDotsto & \cdots \\
   &&&&&&           &&       &&&&&&&          && \uTo  &&&&&&&&&         &       & \uTo &&&&&&&&  &&     &&&            & \uTo &   & \\
   &&&&&&           &&       &&&&&&&          &&         &&&&&&&          &&        &&    &&&&&&&&       \circ &&      &&    &&    & \lDotsto & \cdots \\
   &&&&&&           &&       &&&&&&&          &&         &&&&&&&  && &&  &&&&&&&&& \luTo(2,8) &      &&    &&    &          &  \\
   &&&&&&           &&       &&&&&&&          &&         &&&&&&&       \circ &&        &&    &&&&&&&&& \lDotsto &   \circ &&    &&    & \lDotsto  & \cdots \\
   &&&&&&           &&       &&&&&&& && &&&&&&&& \luTo(2,6) &        &&    &&&&&&&&          &&      &&    &&    &          & \\
   &&&&&&           &&       &&&&&&&       \circ &&        &&&&&&&& \lDotsto &     \circ &&    &&&&&&&&& \lDotsto &      && \circ &&    & \lDotsto & \cdots \\
   &&&&&&  &&&&&&&&&& \luTo(2,4) &         &&&&&&&          &&        &&    &&&&&&&&    &&& \ruTo(2,4) &    &&    &          & \\
   &&&&&&           &&   \circ &&&&&&&& \lDotsto  &  \circ    &&&&&&&& \lDotsto &        && \circ &&&&&&&&& \lDotsto &      &&    && \circ & \lDotsto & \cdots \\
&&&&&&& & \uTo &&&&&&&           && \uTo &&&&&&&&& \uTo & \ruTo(2,2) &    &&&&&&&&  && \uTo &&& \ruTo(4,2) &    &          & \\
\circ &&&&&&& \lDotsto &  \circ &&&&&&&& \lDotsto  &  \circ    &&&&&&&& \lDotsto & \circ     &&    &&&&&&&&& \lDotsto & \circ   &&    &&    & \lDotsto & \cdots \\
\end{diagram}
\end{tiny}
\vspace{-0.2cm}
\begin{small}
\caption{Partial maps obtained by adding a step to each strand everyday, and adding one additional strand of length $1$.}%
\label{fig:PrefixSequence2}%
\end{small}
\end{figure}

If we now take the projective limits of the maps in figures \ref{fig:PrefixSequence1} and \ref{fig:PrefixSequence2}, we indeed find the limits of figure \ref{fig:Limit1} and \ref{fig:Limit2}, respectively. Furthermore, we leave it to the reader to try to construct maps such that the limit in figure \ref{fig:Limit3} is obtained.

\begin{theorem} \label{thm: solution example}
The prefix order in figure \ref{fig:Limit1} is the projective limit of figure \ref{fig:PrefixSequence1}, and the prefix order in figure \ref{fig:Limit2} is the projective limit of figure \ref{fig:PrefixSequence2}.
\end{theorem}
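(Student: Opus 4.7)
The plan is to invoke the characterization theorem just proved, by which the projective limit is, up to isomorphism, the set $\underleftarrow{\lim} f_{i,j}$ of limit executions ordered by $\sqsubseteq$. It then suffices, for each of the two families, to enumerate the limit executions explicitly and to verify that the resulting prefix order matches the target figure.

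For the family of figure \ref{fig:PrefixSequence1}, I would first observe that the strand of length $k$ born on day $k$ persists unchanged in every later $\U_i$: each map $f_{i,i+1}$ is the identity on it and is undefined only on the strand newly born on day $i+1$. Hence any $H \in \underleftarrow{\lim} f_{i,j}$ containing position $l$ of strand $k$ on some day $j \geq k$ must, by bullet 2 of the definition of limit execution, contain that same point on every day $i$ with $k \leq i \leq j$, and by bullet 3 combined with bullet 1 (and the fact that the identity action admits no other preimages) the same point on every day $i \geq j$ as well. This classifies the limit executions as either $H_\bot$ (the initial state on every day) or $H_{(k,l)}$ with $k \geq 1$ and $1 \leq l \leq k$. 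Since $\bot_i \preceq_i$ everything in $\U_i$ and positions on distinct strands are incomparable in every $\U_i$, the induced order places $H_\bot$ strictly below every $H_{(k,l)}$ and gives $H_{(k,l)} \sqsubseteq H_{(k',l')}$ iff $k = k'$ and $l \leq l'$, which is precisely the prefix order of figure \ref{fig:Limit1}.

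For the family of figure \ref{fig:PrefixSequence2}, the strand born on day $k$ grows by one unmapped top element per day, reaching length $i - k + 1$ on day $i \geq k$, and every $f_{i,i+1}$ restricted to such a strand is again the identity on the surviving positions. The same reasoning applies, but now position $l$ of strand $k$ only appears from day $k + l - 1$ onward, with no upper bound on $l$. The limit executions are therefore $H_\bot$ together with $H_{(k,l)}$ for all $k, l \geq 1$, and the same order analysis yields infinitely many strands of infinite length above $\bot$, matching figure \ref{fig:Limit2}. The main obstacle is the completeness half of each classification: one must rule out any limit execution that ``jumps'' between strands or sits at inconsistent positions. The key is that each $f_{i,j}$ splits $\U_j$ into a piece on which it acts as the identity (the strands already present on day $i$) and a piece on which it is undefined (newly added top elements and newly born strands), so bullets 1 and 2 together force a limit execution to commit to a single persistent trajectory from its day of birth onward, leaving no room for branching or shifting and so making the enumerations above exhaustive.
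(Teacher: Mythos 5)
Your proposal is correct, and it rests on the same essential observation as the paper's proof: both families consist of partial identity (inclusion) maps, so the projective limit is the ``eventual union'' of the chain. The difference is in how that conclusion is reached. The paper first formalizes the figures concretely --- setting $\mathbb{X} = (\N \times \N) \cup \{\bot\}$ for figure \ref{fig:Limit2}, $\mathbb{Y} = \{(k,l) \mid l \leq k\} \cup \{\bot\}$ for figure \ref{fig:Limit1}, and exhibiting the two interpretations as two increasing chains of prefix-closed subobjects $Y(n) \subseteq \mathbb{Y}$ and $X(n) \subseteq \mathbb{X}$ with partial-identity connecting maps --- and then simply cites the standard categorical fact that such a chain has the union of its members as projective limit, leaving the verification that the projections are partial history preserving maps to the reader. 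You instead derive the same conclusion from the limit-execution characterization proved earlier in the paper: you classify the limit executions as $H_\bot$ and $H_{(k,l)}$ by showing that bullets 1--3, together with injectivity of partial identities, force each limit execution to be the constant trajectory of a single point from its day of birth onward, and you then check that $\sqsubseteq$ reproduces the intended order. Your route is more self-contained (it effectively proves, in this special case, the standard result the paper invokes) at the cost of being longer; the paper's route is shorter but leans on an external fact and on the reader to supply the concrete coordinatization that you carry out explicitly. One small point worth making explicit in your write-up is the easy converse of your classification: that each $H_{(k,l)}$ really does satisfy all three bullets (existence, not just uniqueness, of the limit executions), though this is immediate from your description of the persistent trajectories.
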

\begin{proof}
We formalize the limit in figure \ref{fig:Limit2}, as the set $\mathbb{X} = (\N \times \N) \cup \{ \bot \}$ ordered by $\bot \preceq (x,y)$, for the root, and $(x,y) \preceq (x',y') \ \Leftrightarrow \ x = x' \ \wedge \ y \leq y'$ for all executions $(x,y)$ and $(x',y')$ in $\mathbb{X}$. One may check that this is indeed a prefix order. The limit in figure \ref{fig:Limit1} is in fact a prefix-closed subset of $\ref{fig:Limit2}$, which we model as $\mathbb{Y} = \{ (k,l) \in \mathbb{X} \mid l \leq k \} \cup \{ \bot \}$.

The two different views on figure \ref{fig:FanSequence1} are modeled by constructing two sequences of subobjects from $\X$ that are isomorphic (hence have the same pictorial representation), but lead to different (direct) limits when the natural inclusions are used as implementation relations between them. The first interpretation coincides with the sequence
\[ Y(n) = \{ (k,l) \in \mathbb{X} \mid l \leq k < n \} \cup \{\bot\} \]
while the second interpretation gives us the sequence
\[ X(n) = \{ (k,l) \in \mathbb{X} \mid l \leq n - k - 1 \} \cup \{\bot\}. \]
We find the maps depicted in figures \ref{fig:PrefixSequence1} and \ref{fig:PrefixSequence2}, respectively, by defining $f_{n,m} : Y(m) \rightarrow Y(n)$ as $f_{n,m}(y) = y$ for all $y \in Y(n)$, and undefined elsewhere, and similarly for $g_{n,m} : X(m) \rightarrow X(n)$. It is a standard result from category theory that the partial identity maps constructed this way yield as a projective limit the set of all elements that occur in the sequence from some point onwards. In other words, it is a standard result that $\underleftarrow{\lim} f_{n,m} = \mathbb{Y}$ and $\underleftarrow{\lim} f_{n,m} = \mathbb{X}$, albeit formally we still need to check that the resulting (identity) maps $g_n : X(n) \rightarrow \X$ and $f_n : Y(n) \rightarrow \Y$ are partial history preserving maps. This is easy to verify and left to the reader.
\end{proof}

\section{Avoiding unwanted executions}

The first problem noted in section \ref{sec:ying} is that constant sequences and nets do not always have their value as a limit. It is a standard result from category theory that the projective limit of a inverse directed family of identity morphisms $\mathit{id}_{\U} : \U \rightarrow \U$ is the object $\U$, so we do not have to worry about this in our setting.

The second problem noted in section \ref{sec:ying} is that limits may add unwanted behavior. In itself, it is not necessarily bad if a limit add's new behavior to a sequence of systems, but this behavior must somehow be explained from the behaviors in the sequence. As an example, we take figure \ref{fig:Limit2} as the limit of the sequence in figure \ref{fig:FanSequence1}. Clearly, new -infinite- behavior has been added. But this infinite behavior follows from an explanation (the sequence of partial history preserving maps) that shows how each strand in the system grows as the sequence progresses. So the new infinite behavior is acceptable.

Now, taking the transition system in figure \ref{fig:limit of constant lts} as the limit of a sequence of transition systems in figure \ref{fig:constant lts} strikes us as counter-intuitive, because we do not have a operational explanation of why this delayed choice would follow from the sequence. This may, of course, be blamed on our own lack of imagination. Fact is, however, that if we unfold the behavior of figures \ref{fig:constant lts} and \ref{fig:limit of constant lts} into their prefix orders, depicted in figures \ref{fig:constant lts pfx} and \ref{fig:limit of constant lts pfx}, we find that there does exist an inverse directed family of partial history preserving maps from figure \ref{fig:constant lts pfx} to itself that has figure \ref{fig:limit of constant lts pfx} as its projective limit. More precisely, there does not exist such a family that also preserves the labeling, i.e.\ $\lambda_i(f_{i,j}) = \lambda_j$ for all $i \leq j \in \mathbb{I}$.

\begin{figure}[ht]
\begin{minipage}[b]{0.45\linewidth}
\centering
\begin{tiny}
\begin{diagram}[height=2em,width=2.1em]
ab   &          &          &           & ac \\
\uTo &          &          &           & \uTo \\
a    &          &          &           & a \\
        & \luTo &          & \ruTo  & \\
        &       & \epsilon &
\end{diagram}
\end{tiny}
\begin{small}
\caption{A constant sequence...}%
\label{fig:constant lts pfx}%
\end{small}
\end{minipage}
\hspace{0.5cm}
\begin{minipage}[b]{0.45\linewidth}
\centering
\begin{tiny}
\begin{diagram}[height=2em,width=2.1em]
 ab  & ab\dag &         &   &        & ac\dag &  ac  \\
\uTo &    & \luTo   &   & \ruTo  &    & \uTo \\
  a  &    &         & a\dag &        &    & a    \\
     & \luTo(3,2) & & \uTo & & \ruTo(3,2) &      \\
     &    &    & \epsilon &      &    &      \\
\end{diagram}
\end{tiny}
\begin{small}
\caption{... an impossible projective limit}%
\label{fig:limit of constant lts pfx}%
\end{small}
\end{minipage}
\end{figure}

To see why the system in figure \ref{fig:limit of constant lts pfx} cannot result as a projective limit, we need to study the delayed branching point. I.e.\ we study the middle execution labeled by $a$, and indicated by $a\dag$ in figure \ref{fig:limit of constant lts pfx}, and its two futures $ab\dag$ and $ac\dag$. Assume that $\U$ represents the system in figure \ref{fig:constant lts pfx}, and let $\{ f_{i,j} : \U \rightarrow \U \mid i \leq j \in \I \}$ be any inverse directed family on $\U$ with projective limit $\V$ and projections $\pi_i : \V \rightarrow \U$. For the points $a\dag$,$ab\dag$, and $ac\dag$ to be part of the projective limit, there must exist a $k \in \I$ such that for every $l \geq k$ we have $a\dag,ab\dag,ac\dag \in \dom(\pi_k)$. Furthermore, for $\pi_k$ to be history preserving, $ab\dag$ and $ac\dag$ must be mapped to futures of $\pi_k(a\dag)$ in $\U$, which is obviously not possible if the labeling $\lambda$ is to be preserved (otherwise, we could still map $ab\dag$ and $ac\dag$ both to $ab$, of course).

\section{Continuity as the limit of discrete refinements} \label{sec:continuous}

The third problem addressed in section \ref{sec:ying}, is that limit bisimulations do not provide a way to deal with sequences of refinements. As an example of how this is resolved, we will now give a rather straightforward example of a refinement of discrete behavior that becomes continuous behavior in the limit. Continuous behavior can be modeled by taking flows, i.e.\ continuous functions over the positive real numbers $\R_{\geq 0}$, as executions and prefix order them in the obvious way.

\begin{definition}[Flows]
Given a topological space $\X$, we define $\X \uparrow \R_{\geq 0} = \{ f : t^- \rightarrow \mathbb{X} \mid f \, \text{continuous} \ \wedge \ t \in \R_{\geq 0} \}$ as the set of all \emph{flows} over $\X$, and note that $\X \uparrow \R_{\geq 0}$ is prefix ordered by defining $f \preceq g$ if and only if $\dom(f) \subseteq \dom(g)$ and $f(t) = g(t)$ for any $t \in \dom(f)$ in the obvious way. A continuous system can then be modeled as a prefix-closed set of flows $\Phi \subseteq \ \X \uparrow \R_{\geq 0}$.
\end{definition}

\begin{definition}[Finite set refutability]
A (prefix closed) set of flows $\Phi \subseteq \X \uparrow \R_{\geq 0}$ is \emph{finite set refutable} if for any flow $f : t^- \rightarrow \X$ with $f \not\in \Phi$ there is a finite (i.e. nowhere dense) set $T \subseteq t^-$ such that for every $\phi \in \Phi$ with $\dom{\phi} = \dom{f}$ we find $f(T) \neq \phi(T)$.
\end{definition}

\noindent We can discretize a set of flows using $\epsilon$ steps.

\begin{definition}[Discretization]
Given a set of flows $\Phi \subseteq \X \uparrow \R_{\geq 0}$ and a step-size $0 < \epsilon \in \R_{\geq 0}$, the \emph{$\epsilon$-discretization} of $\Phi$ is the set of sequences $\Phi_\epsilon = \{ f(0) \cdots f(n \cdot \epsilon) \mid f \in \Phi, \, n \in \N, \, n \cdot \epsilon \in \dom(f) \}$, with its natural prefix order on it.
\end{definition}

\noindent Using a sequence of $\frac{1}{2^k}$ discretizations, we then find the following limit theorem.

\begin{theorem} Given a finite-set refutable set of flows $\Phi \subseteq \X \uparrow \R_{\geq 0}$ and a family of refinement maps $\{ f_{k,l} : \Phi_{2^{-l}} \rightarrow \Phi_{2^{-k}} \mid k,l \in \N \ \wedge \ k \leq l \}$ that is recursively defined by
\begin{itemize}
\item $f_{k,k}(\sigma) = \sigma$, for any $\sigma \in \Phi_{2^{-k}}$
\item $f_{k,l+1}(\epsilon) = \epsilon$ and
\item $f_{k,l+1}(x) = \epsilon$, for any sequence $x \in \Phi_{2^{-k}}$ of length $1$, and
\item $f_{k,l+1}(x \cdot x' \cdot \sigma) = f_{k,l}(x \cdot f_{k,l+1}(\sigma))$, for $x,x' \in \X$ and $x \cdot x' \cdot \sigma \in \Phi_{2^{-k}}$
\end{itemize}
we note that this family is history preserving, and find $\Phi = \underleftarrow{\lim} \, f_{k,l}$.
\end{theorem}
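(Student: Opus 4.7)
The plan is to prove the theorem in two stages: first verify that each $f_{k,l}$ is a partial history preserving map, then exhibit an isomorphism of prefix orders between $\Phi$ and $\underleftarrow{\lim} f_{k,l}$. For the first stage, induction on $l-k$ is the natural strategy. The base case is immediate since $f_{k,k}$ is the identity. For the inductive step, I would first prove the transitivity identity $f_{k,l+1} = f_{k,l} \circ f_{l,l+1}$ by structural induction on the input sequence; since composition of partial history preserving maps is again such a map, this reduces the problem to verifying that each single-step $f_{l,l+1}$ is history preserving. For the latter I would invoke Theorem \ref{thm: backward sim}: order preservation and backward simulation both drop out of a routine induction on sequence length, with the ``collapse of a length-one tail'' in the recursive definition matching prefix-truncation on the $\Phi_{2^{-l}}$ side.

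For the isomorphism, the natural candidate $H : \Phi \to \underleftarrow{\lim} f_{k,l}$ sends a flow $\phi$ with $\dom(\phi) = t^-$ to the family of its discretisations $H(\phi) = \{\phi_k \mid k \in \N\}$, where $\phi_k = \phi(0)\phi(2^{-k})\cdots\phi(n_k \cdot 2^{-k})$ and $n_k = \lfloor t \cdot 2^k \rfloor$. The three limit-execution axioms reduce to the consistency identity $f_{k,l}(\phi_l) = \phi_k$, which in turn reduces to the integer arithmetic $\lfloor \lfloor t \cdot 2^l \rfloor / 2^{l-k} \rfloor = \lfloor t \cdot 2^k \rfloor$. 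Prefix-monotonicity and history-preservation of $H$ itself are then routine, and injectivity is immediate: two flows agreeing at every dyadic rational in their common domain coincide by density plus continuity.

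The main obstacle is surjectivity of $H$, which is precisely where finite-set refutability becomes indispensable. Given a limit execution $L$, its projections $\pi_k(L)$ coherently assemble into a partial function on the dyadic rationals inside some interval in $\R_{\geq 0}$. For each $k$ one picks a witness flow $\psi_k \in \Phi$ whose $2^{-k}$-discretisation is $\pi_k(L)$, using prefix-closedness of $\Phi$ to fix $\dom(\psi_k) = (n_k \cdot 2^{-k})^-$. Continuity of the $\psi_k$ on progressively denser dyadic grids, combined with Hausdorffness of $\X$, determines a unique candidate continuous extension $\phi$ on the corresponding real interval. To conclude $\phi \in \Phi$ I would argue contrapositively: if $\phi \notin \Phi$, finite-set refutability yields a finite $F \subseteq \dom(\phi)$ such that no element of $\Phi$ with domain $\dom(\phi)$ agrees with $\phi$ on $F$. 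But for $l$ sufficiently large, the witness $\psi_l$ agrees with $\phi$ on all dyadic rationals in arbitrarily small neighborhoods of each point of $F$, so continuity of both $\phi$ and $\psi_l$ forces agreement on $F$ itself, contradicting refutability. Once $H$ is established as a bijection, its compatibility with the projections $\pi_k$ is transparent from the construction, so $H$ is a prefix-order isomorphism, identifying $\Phi$ with $\underleftarrow{\lim} f_{k,l}$ as required.
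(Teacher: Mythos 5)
Your proposal follows essentially the same route as the paper: the forward map sends a flow to the family of its dyadic discretisations, injectivity comes from density of the dyadic rationals plus continuity, and membership of the reassembled limit function in $\Phi$ is established by the same contrapositive appeal to finite-set refutability. The only differences are cosmetic --- you verify explicitly that the maps $f_{k,l}$ are history preserving (the paper merely asserts this) and you check the limit-execution axioms for $H(\phi)$ directly, where the paper obtains the forward map $\phi_\infty$ from the universal property of the projective limit.
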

\begin{proof}
By construction of the discretization, we find a history preserving map $\phi_k : \Phi \rightarrow \Phi_{2^{-k}}$ from $\Phi$ to every member of the inverse directed family.

It is a property of the categorical limit, that there is a unique map $\phi_\infty : \Phi \rightarrow \underleftarrow{\lim} f_{k,l}$.
This map returns for every $f \in \Phi$ the set of sequences
\[ \phi_\infty(f) = \{ \sigma_i \mid \sigma_i(n) = f(\frac{n}{2^i}) \ \text{whenever} \ \frac{n}{2^i} \in \dom(f) \}.\]
Conversely, we now observe that we can build a map $\phi_\infty^{-1} : \underleftarrow{\lim} f_{k,l} \rightarrow \X \uparrow \R_{\geq 0}$ that for every $H$ returns a function $\phi_\infty^{-1}(H) = h$ with domain $\dom(h) = [0,\sup \{ \frac{n}{2^i} \mid i \in \N \ \wedge \ n \in \dom(\pi_i(H)) \}]$, that is continuous and furthermore satisfies: \[ \forall_{i \in \N} \forall_{n \in \dom(\pi_i(H))} \pi_i(H)(n) = h(\frac{n}{2^i}) \]
To see that this map is a proper inverse, we must verify that every $\phi_{\infty}^{-1}(H) \in \Phi$.
We write $K = \{\frac{n}{2^i} \mid i,n \in \N \}$ and assume that $h = \phi_{\infty}^{-1}(H) \not\in \Phi$. Because $\Phi$ is finite set refutable, we can find a nowhere dense set $T \in \dom(h)$ such that $\phi(T) \neq h(T)$ for any $\phi \in \Phi$. Furthermore, because $h$ is continuous and $K$ is dense in $\R_{\geq 0}$, this means that also $h(K) \neq \phi(K)$. So for every $\phi \in \Phi$ there is a $k \in K$, hence an $i \in \N$, such that $\pi_i(H)(n) = h(\frac{n}{2^i}) \neq \phi(\frac{n}{2^i})$. But picking $H \in \underleftarrow{\lim} f_{k,l}$ assumes that for every $i$ there exists a $\phi$ with $\pi_i(H)(n) = \phi(\frac{n}{2^i})$. A contradiction.

By construction, we find that $\phi_\infty(\phi_\infty^{-1}(H)) = H$. We use the fact that all $f \in \Phi$ are continuous, hence completely defined by their value at time-points of the form $\frac{n}{2^i}$, to find that $\phi_\infty$ is injective and satisfies $\phi_\infty^{-1}(\phi_\infty(H)) = H$. Hence, after verifying that $\phi_\infty$ and $\phi_\infty^-1$ are indeed history preserving, we conclude that $\Phi$ and $\underleftarrow{\lim} \, f_{k,l}$ are isomorphic.
\end{proof}

In \cite{CuijpersReniers08}, we proved that continuous behavior can only be captured as a timed transition system if it is finite set refutable. It was also argued there that all closed physical systems are in fact finite set refutable. An example of a non-finite set refutable system used there applies here as well: consider the differential inclusion $\dot{x} \in \{-1,1\}$. Obviously, the set of solutions of this inclusion contains all `saw-thooth' shaped flows, while the constant functions are examples of functions that are not included but cannot be finitely refuted. If we apply the above construction to this system, we do not find $\{ x \mid \dot{x} \in \{-1,1\} \} = \underleftarrow{\lim} \, f_{k,l}$ as a limit, but instead we get a differential inclusion over an interval: $\{ x \mid \dot{x} \in [-1,1] \} = \underleftarrow{\lim} \, f_{k,l}$. The system has been closed under finite set refutable behavior, which means that all functions that can be arbitrarily approximated by saw-thooths have been added.

We conjecture that it is not possible approximate arbitrary continuous behavior as a limit of a sequence of trees. To approximate arbitrary continuous behavior, we expect to have to revert to nets over prefix orders, and from some point in these nets onwards the prefix orders will not be trees anymore. Still, the result above illustrates how a sequence of trees can turn into a prefix order that is not a tree anymore. This means that, even though we have not researched how to obtain the solutions of $\dot{x} \in \{-1,1\}$ as a limit yet, we can compare this system to the sequence of trees that approximates its closure $\dot{x} \in \{-1,1\}$, without leaving our category. That is why we propose prefix orders as a more suitable semantics for, in particular, hybrid systems, where executions are combine both continuous and discrete behavior.

\section{Concluding remarks and future work}

In this paper, we introduced the category of prefix orders and partial history preserving maps, in which the prefix orders generalize discrete as well as continuous branching time behavior, and the partial history preserving maps capture the relation between the behavior of compositions of dynamical systems and the behavior of their components. Within this category, we studied projective limits, and showed how they may be used to describe the limit of a sequence of dynamical systems. Using examples of problems in the topological theory of bisimulation, some of which already observed in \cite{Ying}, we argued that it is important to take the relation between systems into account while constructing a limit, and we have shown how a category theoretic approach helps to overcome these problems.

The main message of the category theoretic approach is that it is not only important to consider \emph{whether} one system is related to another, but also \emph{how} this relation comes about. In the future, we would like to apply this insight to structured operational semantics as well. One of the issues we left open in this paper, is how to obtain a relation between a composition and its components in a structured way. For example, we hope to be able to extend the operational semantics of process algebraic operations, such that it not only returns a set of executions, but also the implementation relations between processes and their composition. In case of parallel composition and disjoint union, we happen to have a categorical characterization already, as we have shown in \cite{Cuijpers13a} that those operations follow as the categorical product and co-product if we restrict ourselves to total history preserving maps. But for more complex operators like sequential composition, or even history dependent operations (see e.g. \cite{MontanariPistore97,AartsHeidarianVaandrager12}), it would be nice to derive how processes are related to their composition, from their operational semantics.

\textbf{Acknowledgements:}
We would very much like to thank Ruurd Kuiper, Harsh Beohar, Erik de Vink, Reinder Bril and Helle Hansen for their feedback and brainstorming at various stages of this project. Also, we thank Paul Taylor for providing the LaTeX diagram package used to create the figures in this paper.

\bibliographystyle{eptcs}
\bibliography{CuijpersBibliography}

\end{document}